\newcommand*\LSTfont{\Small\ttfamily\SetTracking{encoding=*}{-60}\lsstyle}
\newcommand*{\sn}{\textsc{TASTI}\xspace}
\newcommand*{\indname}{\textsc{TASTI}\xspace}
\newcommand*{\noscope}{\textsc{NoScope}\xspace}
\newcommand*{\blazeit}{\textsc{BlazeIt}\xspace}
\newcommand*{\supg}{\textsc{SUPG}\xspace}
\newcommand{\minihead}[1]{{\vspace{.45em}\noindent\textbf{#1.} }}
\newcommand{\miniheadit}[1]{{\vspace{.45em}\noindent\textit{#1.} }}
\newcommand*{\universe}{\mathcal{D}}
\newcommand*{\emb}{\phi}
\newcommand*{\lossq}{\ell_Q}
\newcommand*{\Exp}{\mathbb{E}}
\DeclareMathOperator*{\argmin}{arg\,min}
\newtheorem{thm}{Theorem}
\newtheorem{lemma}{Lemma}
\newcommand{\colora}[1]{\iftoggle{rcolors}{{\color{blue}{#1}}}{#1}}
\begin{document}

\title{Semantic Indexes for Machine Learning-based Queries over Unstructured Data}

\author{Daniel Kang*}
\affiliation{
  \institution{Stanford University}
  \city{Stanford}
  \country{USA}
}
\email{ddkang@stanford.edu}
\author{John Guibas*}
\affiliation{
  \institution{Stanford University}
  \city{Stanford}
  \country{USA}
}
\email{jtguibas@stanford.edu}
\author{Peter Bailis}
\affiliation{
  \institution{Stanford University}
  \city{Stanford}
  \country{USA}
}
\email{pbailis@stanford.edu}
\author{Tatsunori Hashimoto}
\affiliation{
  \institution{Stanford University}
  \city{Stanford}
  \country{USA}
}
\email{thashim@stanford.edu}
\author{Matei Zaharia}
\affiliation{
  \institution{Stanford University}
  \city{Stanford}
  \country{USA}
}
\email{matei@cs.stanford.edu}
\titlenote{Marked authors contributed equally.}

\begin{abstract}

Unstructured data (e.g., video or text) is now commonly queried by using
computationally expensive deep neural networks or human labelers to produce
structured information, e.g., object types and positions in video. To accelerate
queries, many recent systems (e.g., \blazeit, \noscope, Tahoma, \supg, etc.)
train a \emph{query-specific proxy model} to approximate a large \emph{target
labelers} (i.e., these expensive neural networks or human labelers). These
models return \emph{proxy scores} that are then used in query processing
algorithms.  Unfortunately, proxy models usually have to be trained \emph{per
query} and require large amounts of annotations from the target labelers.

In this work, we develop an index (trainable semantic index, \sn)
that simultaneously removes the need for per-query proxies and is more efficient
to construct than prior indexes. \sn accomplishes this by leveraging semantic similarity across
records in a given dataset. Specifically, it produces embeddings for each
record such that records with close embeddings have similar target labeler outputs.
\sn then generates high-quality proxy scores via embeddings without needing
to train a per-query proxy. These scores can be used in existing proxy-based
query processing
algorithms (e.g., for aggregation, selection, etc.). We theoretically
analyze \sn and show that a low embedding training error guarantees downstream
query accuracy for a natural class of queries. We evaluate \sn on five video,
text, and speech datasets, and three query types. We show that \indname's
indexes can be 10$\times$ less expensive to construct than generating
annotations for current proxy-based methods, and accelerate queries by up to
24$\times$.

\end{abstract}

\maketitle

\section{Introduction}

Unstructured data, such as video and text, is becoming increasingly feasible to
analyze due to deep neural networks (DNNs). A common approach is to use DNNs to
extract structured information from these sources and use the structured data to
answer queries. However, accurate DNNs can be prohibitively expensive to execute
on large data volumes. For example, Mask R-CNN \cite{he2017mask} (an object
detection DNN) can be used to extract object types and positions from frames of
video, which can be used to answer queries such as counting the number of cars
visible in a frame or finding frames with both a car and a bicycle.
Unfortunately, Mask R-CNN executes as slow as 3 frames per second (fps), or
10$\times$ slower than real time.

To reduce these costs, recent work has proposed
\emph{query-specific proxy models} to approximate high-quality \emph{target
labelers}. Low-cost proxy models can be used for selecting data records
that match a predicate, aggregation queries, and limit queries
\cite{kang2019blazeit, kang2017noscope, kang2020approximate,
anderson2018predicate, hsieh2018focus, lu2018accelerating, bastani2020vaas}. For
each query, a proxy model is trained to generate \emph{proxy scores} for data
records, in which the goal is to approximate the result of executing the target
labeler on that data record for the particular query. These scores are then used in
various algorithms depending on query type. For example, \blazeit \cite{kang2019blazeit} will train a
proxy model to (approximately) count the number of cars per frame of a video to
answer the car counting query, and selection algorithms (e.g., \noscope
\cite{kang2017noscope}, probabilistic predicates \cite{lu2018accelerating},
\supg \cite{kang2020approximate}, and Tahoma \cite{anderson2018predicate}) will
train a separate proxy model to (approximately) filter frames with cars and
bicycles for selecting such frames.

Unfortunately, methods based on query-specific proxy models have three key drawbacks.
First, obtaining
large amounts of training data from the target labeler to train the proxy models can be expensive. For
example, \blazeit and \noscope require hours of GPU compute to execute the
target labeler to produce labels to train the proxy models \cite{kang2017noscope,
kang2019blazeit} and other systems require expensive human annotations
\cite{anderson2018predicate, hsieh2018focus, lu2018accelerating}. Second, these
systems require new training procedures for each query type, which can be
difficult to develop. Third, query-specific proxy models cannot easily share
computation across different queries or query types. Thus, using proxy models
can be challenging and computationally expensive.

We observe that this prior work ignores a key opportunity: redundancy present in
the \emph{target labeler outputs} of many datasets. For example, two frames with visually distinct cars
in the bottom left would have the same result for many queries, e.g., counting
the number of cars or selecting cars in the bottom left. Namely, the structured
outputs (i.e., target labeler results) of many data records are semantically
similar. While fully computing target labeler outputs for all records is expensive,
query processing systems would ideally use this similarity to avoid repeated
work and target labeler invocations.

To address these issues and leverage this opportunity, we propose
\colora{\textbf{T}r\textbf{A}inable \textbf{S}eman\textbf{T}ic \textbf{I}ndexes} (\sn).
\sn is an indexing method over unstructured data for accelerating downstream
proxy score-based query processing methods via embeddings (i.e., vectors in
$\mathbb{R}^n$). Given the target labeler and a user-provided closeness function
over target labeler outputs, \sn produces embeddings for each unstructured data
record (e.g., frame of video), with the desideratum that close records have
close embeddings. \sn's required closeness function is often easy to specify,
e.g., that frames of a video with similar object types and object positions are
close (Section~\ref{sec:index-construction}).\footnote{ \sn can also be used
without training an embedding by using pre-trained embeddings, although query
performance will generally be better with its training method.  } \sn then uses
the embeddings and a small set of records annotated by the target labeler to answer
downstream queries.


Specifically, we propose a method of using \sn's embeddings and the labeled
records (i.e., cluster representatives) to generate proxy scores
\emph{automatically}, including for proxy-based aggregation, selection, and
limit query processing algorithms
(Section~\ref{sec:query-proc})~\cite{kang2019blazeit, kang2017noscope,
kang2020approximate, anderson2018predicate, lu2018accelerating}. \sn generates
per-record proxy scores by propagating annotations from the cluster
representatives to the unlabeled records. For example, we could assign an
unannotated frame the number of cars in the closest cluster representative for
counting cars. These scores can then used in query processing algorithms, such
as those in \blazeit, probabilistic predicates, Tahoma, etc. Moreover,
as the target labeler is executed over more data during query processing, we
can incrementally improve \sn's clustering, which improves performance (i.e.,
\sn's indexes support ``cracking''~\cite{idreos2007database}).

To understand \sn's performance, we provide a theoretical analysis of \sn and
downstream query accuracy (Section~\ref{sec:analysis}). We prove that queries
that are Lipschitz-continuous functions of the data will achieve \emph{exact}
results when using \sn (with sufficiently dense clustering) under 0 training
loss, and quantitative bounds when the training loss is not 0. \colora{Although
our assumptions are strong, our analysis provides statistically grounded
intuition for why TASTI can outperform baselines. We validate our intuition with
extensive experiments (Section~\ref{sec:eval}).}

We implemented \indname in a prototype system and evaluated it on four datasets,
including widely studied video datasets~\cite{kang2019blazeit,
kang2020approximate, xu2019vstore, jiang2018chameleon, canel2019scaling}, a text
dataset~\cite{zhongSeq2SQL2017}, and a speech dataset~\cite{ardila2019common}.
We integrated \sn into query processing algorithms for aggregation, selection,
and limit queries as proposed by prior work and executed these queries over the datasets. We show that \sn's
indexes require up to 10$\times$ \emph{fewer} labels from the target labeler to
construct than generating training data for per-query proxy model methods, as
\sn leverages redundancy in the datasets. Furthermore, \sn outperforms on query runtime across
all queries and datasets we evaluate on by up to 24$\times$ over previous
optimized systems.

In summary, our contributions are:
\begin{enumerate}[itemsep=0em,parsep=0em,topsep=0em,leftmargin=1.4em]
  \item We propose a method for constructing semantic indexes (\indname) for
  queries over unstructured data.
  \item We theoretically analyze \sn, providing a statistical understanding of
  our algorithm's performance.
  \item We evaluate \sn on five datasets and three query types,
  showing it can outperform state-of-the-art.
\end{enumerate}


\section{Overview and Example}
\label{sec:sys-overview}

\subsection{Background and Problem}
We first describe how target labelers and proxy scores are used in analytics systems
before describing an overview of \sn.

\colora{
Many analytics applications over unstructured data are powered by expensive DNNs
that extract structured data from these unstructured sources or human labelers
(e.g., ground-truth annotations for studying social or life sciences
\cite{kang2021accelerating}). We refer to these expensive DNNs and human
labelers as \emph{target labelers}.} These target labelers \emph{induce a schema}
over the extracted data. For example, object detection DNNs can extract
information about object types and positions from frames of a video. The schema
would contain columns for object type, object $(x,y)$-positions, and timestamps.
Unfortunately these high-quality target labelers, such as Mask R-CNN or BERT, can be
expensive and dominate query execution costs.

Thus, recent work attempts to accelerate queries with target labelers by using proxy
scores (e.g., \blazeit, \noscope, probabilistic predicates, Tahoma, \supg,
etc.). The most common method of producing proxy scores is to train a smaller
DNN (also called a ``specialized NN'' or ``proxy model'') that will produce a
proxy score per data record. These proxy models are typically trained to
approximate the output of the target labeler for the query at hand, e.g., a count
for an aggregation query, and can yield substantial query speedups.
Constructing the training data can be expensive as the training data
must reflect a wide range of potential queries.

We describe two examples of using proxy scores to accelerate queries, both of
which train a new proxy model per query. In both cases (and more generally), the
goal is to generate proxy scores that are highly correlated with the target
labeler outputs: these algorithms will adaptively improve with better proxy
scores.

\miniheadit{Approximate aggregation}
Suppose the user issues a query for the average number of cars per frame in a
video, as studied by \blazeit \cite{kang2019blazeit}. \blazeit takes as input an
error target and proxy scores.

To optimize this query, \blazeit trains a cheap proxy model whose output is the
predicted number of cars per frame using a sample of frames annotated by the
target labeler. This proxy model is then used to generate a
query-specific proxy score per frame. \blazeit then uses these scores as a
``control variate'' \cite{hammersley1964general} to reduce the variance in
estimation. Proxy scores that are more correlated with the true count will
result in faster query execution.

\miniheadit{Approximate selection}
Suppose the user issues a query to select 90\% of frames with cars with 95\%
probability of success, as studied by the ``recall target'' setting in \supg
\cite{kang2020approximate}. Specifically, \supg takes as input a target recall
and (in contrast to \blazeit) a fixed target labeler budget. \supg will train a
proxy model that estimate the probability of a record matching a predicate.

Given proxy scores, \supg will use importance
sampling and return a set of records achieving the recall target. Other recent
proxy-based systems accelerate selection queries without guarantees
\cite{anderson2018predicate, hsieh2018focus, lu2018accelerating,
bastani2020vaas}.

\subsection{\sn's Inputs, Outputs, and Goals}

\begin{figure*}[t!]
  \begin{subfigure}{0.48\textwidth}
    \includegraphics[width=0.99\textwidth]{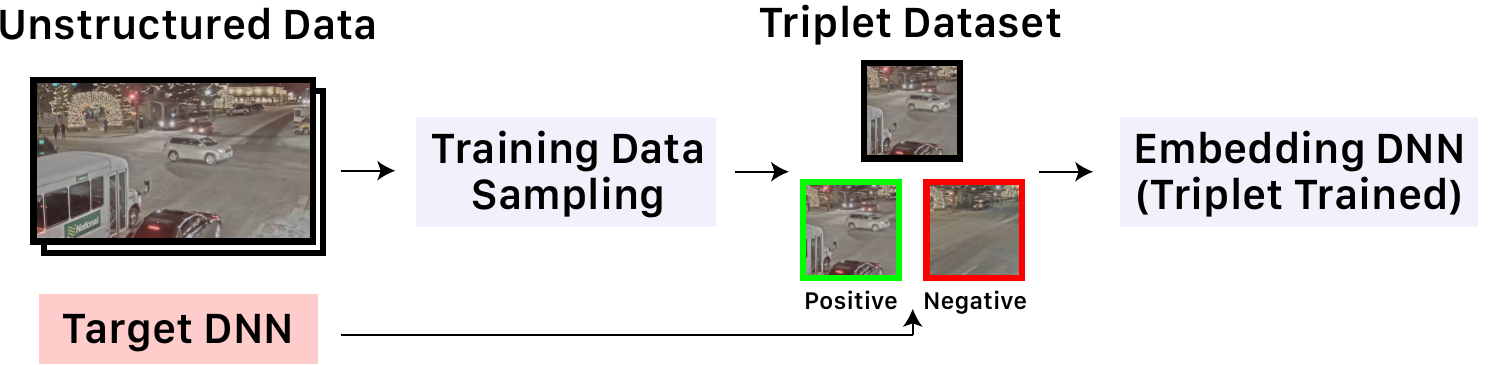}
    \caption{Overview of \sn's procedure for training embeddings. Training data
    is selected via the induced schema and a pre-trained embedding DNN. This
    data is then used to train an embedding DNN via the closeness function
    provided by the user.}
    \label{fig:sys-training}
    \vspace{1.5em}
  \end{subfigure}
  \hspace{0.15em}
  \begin{subfigure}{0.48\textwidth}
    \includegraphics[width=0.99\textwidth]{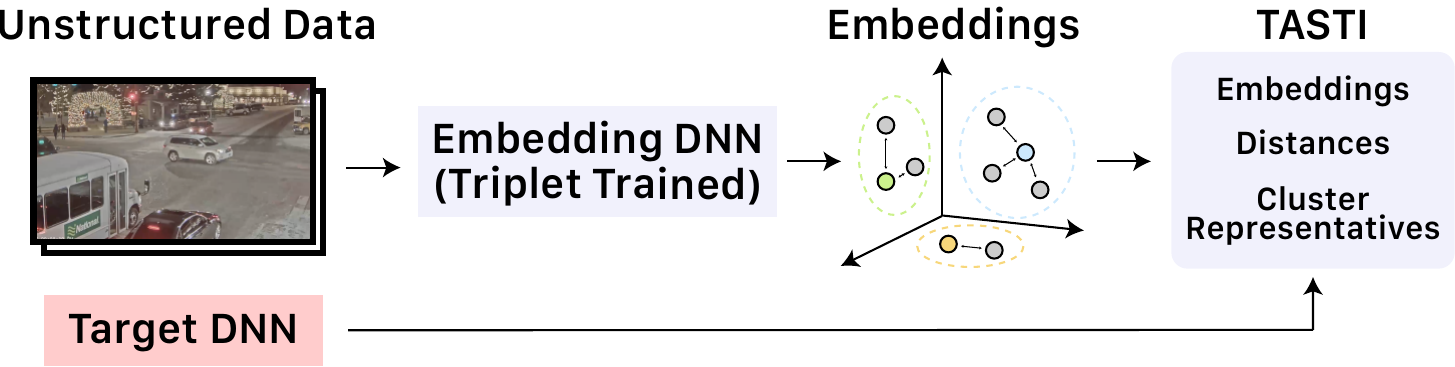}
    \caption{Overview of \sn's index construction procedure. \sn computes
    per-record embeddings, selects sample records to annotate (cluster
    representatives), and computes embedding distances from the unannotated
    records to the cluster representatives. \sn then stores the target labeler
    results, distances, and embeddings as its index.}
    \vspace{1em}
    \label{fig:sys-index-construction}
  \end{subfigure}
  \begin{subfigure}{0.98\textwidth}
    \includegraphics[width=0.98\textwidth]{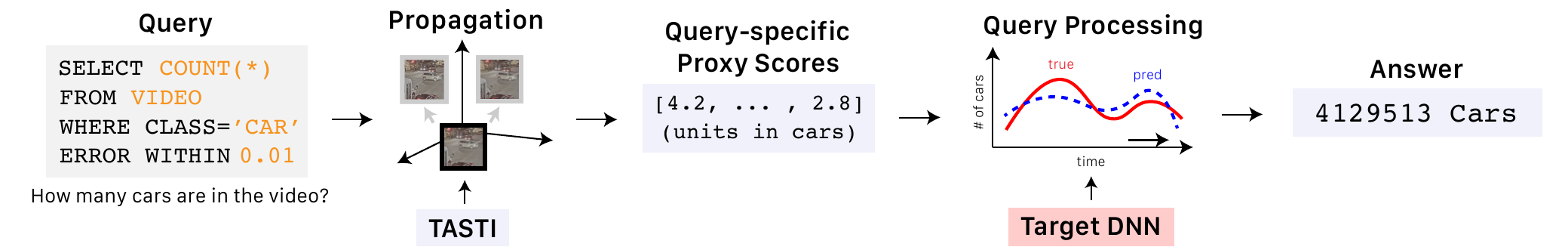}
    \caption{Overview of \sn's query processing. Given a query, \sn will compute
    exact results on the cluster representatives. It will then produce proxy
    scores on unannotated records by propagating the exact scores via embedding
    distance. These scores can then be used in existing downstream query
    processing algorithms based on proxy scores.}
    \label{fig:sys-query-proc}
  \end{subfigure}
  \vspace{-0.5em}
  \caption{\sn system overview.}
  \label{fig:sys-overview}
  \vspace{-0.5em}
\end{figure*}

\minihead{Overview}
As input, \sn takes a target labeler, an induced schema, a target labeler invocation
budget for index construction, a closeness function over the induced schema, and
a parameter $k$ that specifies how many distances to store for reach record. The
primary cost in index construction are the target labeler invocations. \sn will
produce an embedding-based index subject to the budget that can produce proxy
scores for a range of queries.

\sn's primary goal is to produce high quality proxy scores for query processing
algorithms, as with per-query proxy models, but \emph{without} training a new
model per query.

\colora{
\minihead{Supported queries}
We demonstrate how to generate proxy scores for selection queries, aggregation
queries, and limit queries \cite{kang2017noscope, kang2019blazeit,
kang2020approximate, anderson2018predicate, lu2018accelerating} with \sn. \sn
can be used with other queries requiring proxy scores. Since the initial draft,
other work has used \sn to support aggregation queries with predicates
\cite{kang2021accelerating}.

}

\subsection{Example}
Consider constructing an index for visual data, in which queries over object
types and positions are issued. In this case, the target labeler (e.g., Mask R-CNN)
takes an unstructured frame of video and returns a structured set of records
that contains fields about the positions and types of objects in the frame.
Consider two queries, one of which counts the number of cars per frame
(aggregation query as supported by \blazeit) and one that selects frames with
cars (selection query as supported by \noscope, probabilistic predicates,
\supg, etc.).
To understand how the index construction procedure and query processing works,
we describe the intuition below.

\minihead{Index construction}
\sn builds its index by training an embedding DNN for the input data and
then clustering results based on it. Ideally, semantically similar records are
grouped together, e.g., all frames with two cars might form one cluster, all
frames with one bike and one car might form a cluster, etc.

To train an embedding DNN, \sn requires a heuristic for
``close'' and ``far'' target labeler outputs, either as a Boolean function or as a
cutoff based on a continuous distance measure. One such heuristic for our video
application is to group frames with the same number of objects and similar
positions together. The grouping of ``close'' frames can be specified in
pseudocode as follows:
\begin{lstlisting}
def is_close(frame1: List[Box], frame2: List[Box])->bool:
    if len(frame1) != len(frame2):
        return False
    return all_boxes_close(frame1, frame2)
\end{lstlisting}
where \texttt{all\_boxes\_close} is a helper function that returns true if all boxes in
\texttt{frame1} have a corresponding ``close'' box in \texttt{frame2}. Given the
closeness function, \sn trains a low-cost embedding DNN via the triplet loss~\cite{weinberger2009distance},
which separates ``far'' frames. Then, \sn computes embeddings
over all frames of the video with the embedding DNN and select a set of frames to annotate with the
target labeler. It will then store the target labeler's outputs (object types and positions)
for each cluster representative. \sn uses the cluster representatives and
distances from unannotated frames' embeddings for downstream query processing.


\minihead{Query processing}
\sn can now be used to produce proxy scores for a range of downstream queries using existing proxy-based algorithms.
First, \sn generates exact scores on cluster
representatives, e.g., the exact count of the number of
cars from the cached target labeler outputs. Then, \sn propagates these scores
to the remainder of the records, e.g., producing approximate counts for the
unannotated records. We describe
the intuition behind two example queries: in both of these examples, \sn need
not train a new proxy model per query and can reuse its index.

\miniheadit{Approximate aggregation}
Consider the query of counting the average number of cars per frame. \sn computes the
query-specific proxy score as the distance-weighted average of the number of
cars in the $k$ closest annotated frames (see Section~\ref{sec:query-proc} for
pseudocode). This will produce an estimate of the number of cars in a given
unannotated frame. These scores can then be used by \blazeit's query processing
algorithm \cite{kang2019blazeit}.


\miniheadit{Approximate selection}
To estimate the probability of matching the predicate (i.e., query-specific proxy score) for \supg, \sn
will compute the weighted average as above, except that annotated frames that
contain a car receive a score of 1 and annotated frames that do not contain a
car receive a score of 0. These scores can then be used by \supg's selection
algorithm \cite{kang2020approximate}.


\vspace{0.5em}

We now discuss \sn's index construction method
(Section~\ref{sec:index-construction}) and \sn's query processing method
(Section~\ref{sec:query-proc}).

\section{Index Construction}
\label{sec:index-construction}

We describe how \sn constructs indexes, which can be used to produce high
quality proxy scores without the use of query-specific proxy models. Many
queries only require a low dimensional representation of data records to answer,
such as object types and positions (as opposed to raw pixels in a video).
Furthermore, in many applications, this low dimensional representation has a
natural closeness function, which can be directly used to construct high quality
proxy scores. \sn attempts to construct representations that reflect these
heuristics by grouping close records and separating far records.

We show a schematic of the training in Figure~\ref{fig:sys-training}, the
index construction in Figure~\ref{fig:sys-index-construction}, and the overall
algorithm in Algorithm~\ref{alg:index}. \sn's index
construction procedure consists of optionally training an embedding DNN via the
triplet loss, producing embeddings per record, selecting cluster
representatives, and computing statistics over the cluster representatives.

Throughout, we use the furthest point first (FPF) clustering algorithm
\cite{gonzalez1985clustering}. FPF iteratively chooses the furthest point from
the existing cluster representatives as the new representative. FPF performs
well in practice, is computationally efficient, and provides a 2-approximation
to the optimal maximum intra-cluster distance (which our analysis uses).

\subsection{Training the Embeddings}

\sn optionally trains a mapping between data records (e.g., frames
of a video) and semantic embeddings. The semantic embeddings have
the desideratum that data records that have similar extracted attributes are
close in embedding space, and vice versa for records that have dissimilar
extracted attributes. For example, consider queries over object type and
position. A frame with a single car in the upper left should be close to another
frame with a single car in the upper left, but far from a frame with two cars in
the bottom right.

We describe our training method via domain-specific triplet losses and show a
schematic in Figure~\ref{fig:sys-training}. We note that \sn's training
procedure is optional: pre-trained embeddings can be also be used for the index
if training is expensive.

\minihead{Domain-specific triplet loss}
To train the embedding DNN, \sn uses the triplet
loss~\cite{weinberger2009distance}. The triplet loss takes an anchor point, a
positive example (i.e., a close example), and a negative example (i.e., a far
example). It penalizes examples where the anchor point and the positive point
are further apart than the anchor point and the negative point (see
Section~\ref{sec:analysis}).

A key choice in using the triplet loss is selecting points that are ``close''
and those that are ``far.'' This choice is application specific, but many
applications have natural choices. For example, any frame of video with
different numbers of objects may be far (see Section~\ref{sec:sys-overview} for
pseudocode). Furthermore, frames with the same number of objects, but where the
objects are far apart may also considered far.

\begin{algorithm}[t!]
\caption{
  Pseudocode for \sn's index construction procedure. Given a dataset $X$,
  training points $N_1$, number of cluster representatives $N_2$, and min-$k$ to
  retain, \sn will construct the index as follows. FPF is the furthest point
  first algorithm, where the arguments are the embeddings and the number of
  points to select.
}
\label{alg:index}
\begin{algorithmic}
  \Function{Make \sn index}{$X$, $N_1$, $N_2$, $k$}
    \State PretrainedEmbeddings$[i]$ $\gets$ PretrainedModel$(X[i])$
    \State TrainingPoints $\gets$ FPF(PretrainedEmbeddings, $N_1$)
    \State TripletModel $\gets$ Finetune(TrainingPoints, PretrainedModel)
    \State Embeddings$[i]$ $\gets$ TripletModel$(X[i])$
    \State ClusterRepresentatives $\gets$ FPF(Embeddings, $N_2$)
    \State MinKDistances[i] $\gets$ ClosestKDistances($X[i]$, ClusterRepresentatives, $k$)
    \State \Return ClusterRepresentatives, MinKDistances
  \EndFunction
\end{algorithmic}
\end{algorithm}

\minihead{Training data selection (FPF mining)}
Training via the triplet loss requires invocations of the target labeler to
determine whether pairs of records are close or not. Due to the cost of the
target labeler, \sn must sample records to be selected for training; we assume the
user provides a budget of target labeler invocations. While \sn could randomly
sample data points, randomly sampled points may mostly sample redundant records
(e.g., majority of empty frames) and miss rare events. We empirically show that
randomly sampling training data results in embeddings that perform well on
average, but can perform poorly on rare events (Section~\ref{sec:eval}).

To produce embeddings that perform well across queries, we would ideally sample
a diverse set of data records. For example, suppose 80\% of a video were empty:
selecting frames at random would mostly sample empty frames. Selecting frames
with a variety of car numbers and positions would be more sample efficient.

When available, \sn uses a pre-trained DNN to select such diverse points. These
pre-trained DNNs are widely available, e.g., DNNs pre-trained on ImageNet
\cite{he2016deep} or on large text corpora (BERT) \cite{devlin2018bert}.
Pre-trained DNNs produce embeddings that are semantically meaningful, although
not adapted to the specific induced schema.

To produce training data that results in embeddings that perform well on rare
events, \sn performs the following selection procedure. First, \sn uses a
pre-trained DNN to generate embeddings over the data records. Then, \sn
executes the FPF algorithm to select the training data. \sn constructs triplets
from the training data via target labeler annotations. \sn will first bucket records
by the closeness function. To construct a triplet, \sn will sample two different
buckets at random: it will select the anchor and positive record at random from
the first bucket and a negative record at random from the second bucket.

\subsection{Clustering}
\sn produces clusters via the embedding DNN. As we describe in
Section~\ref{sec:query-proc}, \sn propagates annotations/scores from cluster
representatives to unannotated data records.

A key choice is which data records to select as cluster
representatives. Similar to selecting training data, \sn could select a set of
cluster representatives at random. While random sampling may do well on average
at query time, it may perform poorly on rare events (i.e., outliers).

To address this issue, \sn selects cluster representatives via FPF.  FPF chooses
points that are far apart in embedding space. Thus, if the embeddings are
semantically meaningful, then FPF will select data records that are diverse.
Finally, we mix a small fraction of random clusters, which helps ``average-case
performance'' queries.

\sn stores the distances of all embeddings to each cluster representative. As
we describe in Section~\ref{sec:query-proc}, \sn uses the $k$ nearest cluster
representatives for query processing.

\subsection{Cracking \sn Indexes}
In contrast to prior work, which can only share work between queries in an
ad-hoc manner, \sn's proxy scores will improve as queries are executed.
In particular, when any query executes the target labeler on a data record, \sn can cache the target
labeler result. The records over which the target labeler are executed can then be
added as new cluster representatives. Computing the distance to the new cluster
representative is computationally efficient and trivially parallelizable. We
note that this is a form of ``cracking'' \cite{idreos2007database}.

\subsection{Computational Performance}
\label{sec:index-perf-analysis}
Suppose there are $N$ data records, $D$ dimensions, $L$ training iterations, and
a total target labeler budget of $C$. Denote the costs of the target labeler, embedding
DNN, and distance computation as $c_T$, $c_E$, and $c_D$ respectively. The total
cost of index construction is $O(C \cdot c_T + L \cdot c_E + N \cdot c_E + N C D
\cdot c_D)$ assuming the cost of a training iteration is proportional to the
cost of the forward pass \cite{justus2018predicting}.

The ratio of these steps depends on the relative computational costs. In many
applications, the cost of embedding is less expensive than the cost of the
target labeler. For example, Mask R-CNN can execute as slow as at 3 fps,
compared to an embedding DNN which executes at 12,000 fps
\cite{kang2020jointly}. \colora{Furthermore, human labelers are orders of
magnitude more expensive than embedding DNNs (up to 100,000$\times$ more
expensive).}

\section{Query Processing with \sn}
\label{sec:query-proc}

How can \sn indexes accelerate query processing? We propose automatic
methods of construct query-specific proxy scores with \sn, which can then be
passed to existing proxy score-based algorithms. These query-specific proxy
scores are an approximation of the result of executing the target labeler on the
data records for the particular query. Consider an aggregation query counting
the average number of cars per frame \cite{kang2019blazeit}. The query-specific
proxy scores would be an estimate of the number of cars in a given frame.

Many downstream query processing algorithms only require proxy scores and the
target labeler. For example, selection without guarantees (e.g.,
binary detection) \cite{kang2017noscope, anderson2018predicate,
lu2018accelerating}, selection with statistical guarantees
\cite{kang2020approximate}, aggregation \cite{kang2019blazeit}, and limit
queries \cite{kang2019blazeit} only require query-specific proxy scores and the
target labeler. 

\sn provides a default method of taking the target labeler output and producing a
numeric score, which can support aggregation, selection and limit queries. Its
default method produces exact scores on the cluster representatives and
propagates using the distance-weighted average for numeric columns and
distance-weighted majority vote for categorical columns. If desired, a developer may also
implement custom functions to produce proxy scores for other queries. We
describe several examples of how proxy scores can be computed and used for
common query types, and then describe the interface for implementing custom
proxy scores. We show a schematic of the query processing procedure in Figure
\ref{fig:sys-query-proc}.


\subsection{Query Processing Examples}
\label{sec:qp-examples}
We provide examples of the query-specific scoring functions, score
propagation, and downstream query processing for several classes of queries
below.

\minihead{Approximate aggregation}
Consider the example of counting the average number of cars per frame, as
studied by \blazeit~\cite{kang2019blazeit}. The scoring function would take the
detected boxes in a frame and return the count of the boxes matching ``car,'' as
shown above.
For $k=1$, the query-specific proxy score would be the count for the nearest
cluster representative and for $k>1$, it would be the distanced-weighted mean
count of the nearest $k$ cluster representatives for a given frame.

The query-specific proxy scores can be used to answer the query with statistical
error bounds, e.g., used as a control variate by the \blazeit's query processing
algorithm. The scores could also be used to directly answer the query.

\minihead{Selection}
Consider a query that selects all frames of a video with a car, as studied by
prior work~\cite{kang2017noscope, anderson2018predicate, lu2018accelerating,
kang2020approximate}. The scoring function would take the detected boxes in a
frame and return 0 if there are no cars and 1 if there is a car in the frame.
The query-specific proxy score can be smoothed for $k>1$.

The query-specific proxy scores can be used as input to \supg, in which sampling
is used to achieve statistical guarantees on the recall or precision of selected
records~\cite{kang2020approximate}. These scores can also be used directly to
answer the query (i.e., return the records with value above some threshold,
either ad-hoc or computed over some validation set), as other systems
do~\cite{kang2017noscope, anderson2018predicate, lu2018accelerating}.

\minihead{Limit queries}
Consider a query that selects 10 frames containing at least 5
cars~\cite{kang2019blazeit}. Such queries are often used to manually study rare
events. In this case, the scoring function and query-specific proxy scores would
be the same as for aggregation. For limit queries, we generally recommend using
$k=1$, since this query is typically focused on ranking rare events.
The query processing algorithm will examine frames with the target labeler as
ordered by the query-specific proxy scores. The algorithm will terminate once
the requested number of frames is found.

\subsection{Custom Proxy Scores}

\sn has built-in functionality to compute and propagate scores for selection,
aggregation and limit queries using the methods described in the previous
section. In addition, developers may specify custom scores to extend TASTI to
supporting other queries.

\sloppypar{
The API for specifying scoring functions is as follows. Denote the type of
the output of the target labeler as \texttt{TargetLabelerutput} (e.g., a list of
bounding boxes) and the type of the score as \texttt{ScoreType} (e.g., a float).
Using Python typing, the developer would implement:
}
\begin{lstlisting}
def Score(target_output: TargetLabelerOutput) -> ScoreType
\end{lstlisting}
These functions can be implemented in few lines of code. We show the pseudocode
for the example above:
\begin{lstlisting}
def CountCarScore(boxes: Sequence[Boxes]) -> int:
    return len([box for box in boxes
                if box.object_type == 'car'])
\end{lstlisting}
Other queries, e.g., over object positions, can be implemented similarly with
few lines of code.

\subsection{Score Propagation}
Given the query-specific scoring functions, \sn will execute the scoring
functions on the cluster representatives (as the target labeler outputs are
available for these data records). In order to execute downstream query
processing, \sn must also materialize approximate scores for the remainder of
the data records.

To produce these query-specific proxy scores, \sn will propagate scores from the
cluster representatives to the unannotated records. The score for
each data record will be the inverse distance-weighted mean of the nearest $k$ cluster
representatives for numeric scores. For categorical scores, \sn will take the
distance-weighted majority vote. Since the distances to cluster representatives
are cached, this process is computationally efficient.
A developer may also implement a custom method of propagating scores. We show an
example of such a method in Section~\ref{sec:eval-fu} for limit queries.


\section{Theoretical Analysis}
\label{sec:analysis}

We present a statistical performance analysis of our methods to better
understand resulting query quality. Intuitively, if the original data records
have a metric structure and the triplet loss recovers this structure, we expect
downstream queries to behave well. Specifically, we provide guarantees on query
quality (typically accuracy) when using \sn directly. We show that accuracy for a natural class of
``smooth'' queries is directly connected to the triplet loss and the density of
clustering.
\colora{
While the assumptions in our analysis may not hold in practice, we conduct our
analysis to provide statistically grounded intuition for why \sn can outperform
baseline methods. We validate our intuition with extensive experiments
(Section~\ref{sec:eval}).

}

We formalize this intuition by analyzing how downstream queries behave under the
triplet loss. We specifically analyze the case where $k=1$, i.e., using a single
cluster representative in query processing.


\subsection{Notation and Preliminaries}
\minihead{Notation}
We define the set of data records as $\universe
:= \{ x_1, ..., x_N\}$, the scoring function $f(x_i): \universe \to \mathbb{R}$,
and the embedding function $\emb(x_i): \universe \to \mathbb{R}^d$.
Denote the cluster representatives as $R := \{ x_r : r \in \mathcal{R} \}
\subset \universe$ for some set $\mathcal{R} \subset \{1, ..., N\}$. Given this
set, we denote the representative mapping function as $c(x_i): \universe \to R$,
which maps a data record to the nearest cluster representative, and the
query-specific scores as $\hat{f}(x) := f(c(x))$.

Suppose there is a query-specific loss function $\lossq(x_i, y_i): \universe
\times \mathbb{R} \to \mathbb{R}$ where $y_i \in \mathbb{R}$ is the predicted
label. $\lossq$ will be used to evaluate the quality of $f$ and $\hat{f}$ as
$\lossq(x, f(x))$ and $\ell_Q(x, \hat{f}(x))$.

We define the per-example triplet loss as
{
\small
\[
\ell_T(x_a, x_p, x_n; \emb, m) := \max(0, m + |\emb(x_a) - \emb(x_p)| - |\emb(x_a) - \emb(x_n)|)
\]
}
where we omit $\emb$ and $m$ where clear. Define the ball of radius $M$ as
$B_M(x) = \{ x' : d(x, x') < M \}$ and its complement $\bar{B}_M$.  For random
variables $x_a \sim \universe$, $x_p \sim B_M(x_a)$, and $x_n \sim
\bar{B}_M(x_a)$ drawn uniformly from the sets, we define the population triplet
loss as
\begin{align}
    L(\emb; M, m) := \Exp_{x_a, x_p, x_n}[ \ell_T(x_a, x_p, x_n; \emb, m) ]
\end{align}
for some margin $m > 0$.

%

\minihead{Assumptions and properties}
We make the following assumptions. We first assume that there is a metric
$d(x_i, x_j)$ on $\universe$ and that $\universe$ is compact with metric $d$.
We further assume that $\lossq(x, y)$ is Lipshitz in $x$ and $y$ with constant
$K_Q/2$, in both arguments.

For both of our proofs, we assume the triplet loss is low and the cluster
representatives are dense enough under $\emb$. Low triplet loss controls the
quality of the embeddings with respect to the original metric $d$. The density
of the cluster representatives controls how close the unannotated records are
from the cluster representatives in the original space.

\minihead{Example}
Consider the video setting described in
Section~\ref{sec:sys-overview}. $\universe$ is the set of frames, $\emb$ is the
trained embedding DNN, and we use the metric induced by closeness function
also described in Section~\ref{sec:sys-overview}.
Consider the two queries: aggregation queries for the number of cars and
selecting frames of cars. For the aggregation query, $f$ maps frames to the
number of cars. For the selection query, $f$ maps frames with cars to 1 and
frames without cars to 0.

\subsection{Theorem Statements}

We defer all proofs to Appendix~\ref{sec:proofs}.

\minihead{Zero loss case}
To theoretically analyze our index and query processing algorithms, we first
consider the case where the embedding achieves zero triplet loss (we generalize
to non-zero loss below). We show the following positive result: using the
query-specific proxy scores in this setting will achieve bounded loss. In fact, for
$\lossq$ that are identically 0 (e.g., for the example above), \sn will achieve
\emph{exact} results.

We now state the main theorem for the zero-loss case.

\begin{thm}[Zero loss]
\label{thm:zero-loss}
Let $\emb$ be an embedding that achieves $L(\emb; M, m) = 0$ and $c$ be such
that $\max_{x \in \universe} |\emb(x) - \emb(c(x))| < m$. Then, the query
procedure will suffer an expected loss gap of at most 
\begin{align}
    \Exp[\lossq(x, \hat{f}(x))] \leq \Exp[\lossq(x, f(x))] + M \cdot K_Q.
\end{align}
\end{thm}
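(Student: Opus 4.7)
The plan is to reduce Theorem~\ref{thm:zero-loss} to the metric-space statement that every data record lies within distance $M$ of its cluster representative, and then to push that metric closeness through the Lipschitz structure of $\lossq$. Concretely, the first step is to show $d(x, c(x)) < M$ for every $x \in \universe$, and the second step is to bound the pointwise gap $\lossq(x, \hat{f}(x)) - \lossq(x, f(x))$ by a quantity proportional to $d(x, c(x))$, so that taking expectation gives the $M \cdot K_Q$ slack claimed.

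For the metric step, I use the zero population loss together with the non-negativity of $\ell_T$. Since $\ell_T \geq 0$, the assumption $L(\emb; M, m) = 0$ forces $\ell_T(x_a, x_p, x_n) = 0$ for every triple in the support of the sampling distribution. Unfolding the definition, this gives $|\emb(x_a) - \emb(x_n)| \geq m + |\emb(x_a) - \emb(x_p)| \geq m$ whenever $x_n \in \bar{B}_M(x_a)$ and $x_p \in B_M(x_a)$. Fix any $x \in \universe$ and suppose for contradiction that $c(x) \in \bar{B}_M(x)$; setting $x_a = x$ and $x_n = c(x)$ in the inequality above yields $|\emb(x) - \emb(c(x))| \geq m$, contradicting the hypothesis $\max_{x \in \universe} |\emb(x) - \emb(c(x))| < m$. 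Hence $c(x) \in B_M(x)$, i.e., $d(x, c(x)) < M$, for every $x$.

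For the loss-transfer step, I use $\hat{f}(x) = f(c(x))$ and compare both $\lossq(x, \hat{f}(x))$ and $\lossq(x, f(x))$ to the anchor value $\lossq(c(x), f(c(x)))$. One application of the Lipschitz-in-$x$ bound gives $\lossq(x, f(c(x))) \leq \lossq(c(x), f(c(x))) + (K_Q/2)\, d(x, c(x))$, and a symmetric application together with the implicit Lipschitz compatibility between $f$ and $d$ (the structural content of a ``Lipschitz-continuous function of the data'' alluded to in the introduction) controls $\lossq(c(x), f(c(x))) - \lossq(x, f(x))$ by another $(K_Q/2)\, d(x, c(x))$. Summing yields the pointwise bound $\lossq(x, \hat{f}(x)) - \lossq(x, f(x)) \leq K_Q \cdot d(x, c(x)) \leq K_Q M$, and taking $\Exp_{x \sim \universe}$ finishes the argument. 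The main obstacle is the bookkeeping in this last step: the hypothesis only splits the Lipschitz constant as $K_Q/2$ per argument, so ending up with exactly $K_Q$ requires choosing the anchor $c(x)$ carefully and leveraging the metric closeness established in step one, rather than invoking a direct Lipschitz bound on $f$.
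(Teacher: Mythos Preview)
Your proposal is correct and mirrors the paper's proof almost exactly: the paper packages your first step (the contrapositive argument yielding $d(x,c(x)) < M$ from zero triplet loss) as Lemma~\ref{lemma:dist}, and then uses the same triangle-inequality decomposition through the anchor $\lossq(c(x), f(c(x)))$ together with the Lipschitz hypothesis on $\lossq$ to obtain the pointwise bound $|\lossq(x,f(x)) - \lossq(x,\hat f(x))| \leq M\,K_Q$ before taking expectations. The caveat you flag---that controlling the term where both arguments change tacitly requires a Lipschitz-type compatibility between $f$ and $d$---is precisely the step the paper also leaves implicit under the phrase ``the Lipschitz condition of $\lossq$.''
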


\minihead{Generalization to Non-zero Loss}
We generalize our analysis to the non-zero loss case below. We show that the
loss in queries is bounded by the triplet loss and several other natural
quantities.

\begin{thm}[Non-zero loss]
\label{thm:non-zero-loss}
Consider an embedding $\emb$ that achieves $L(\emb; M, m) = \alpha$ and a
clustering $c$ such that $\max_{x \in \universe}|\emb(x) - \emb(c(x))| < m$.
Assume that the query loss $\lossq$ is upper bounded by $C$. Then, query
procedure will suffer an expected loss gap of at most
{
\footnotesize
\begin{align}
\Exp[\lossq(x, \hat{f}(x))] \leq \Exp[\lossq(x, f(x))] +
    M \cdot K_Q +
    \frac{C \sup_x |\bar{B}_M(x)|}{m} \alpha.
\end{align}
}
\end{thm}

\subsection{Discussion}
We have shown that many classes of queries will have bounded loss (i.e.,
discrepancy from exact answers). However, we note that our analysis has several
limitations. First, \sn uses the nearest $k=5$ cluster representatives to generate the
query-specific proxy scores by default, not $k=1$ as used in our analysis. Second, the triplet loss may be large in practice.
Third, not all queries admit Lipschitz losses.  Nonetheless, we believe our
analysis provides intuition for why \sn outperforms even recent
state-of-the-art. We defer a more detailed analysis to future work.

\section{Evaluation}
\label{sec:eval}

We evaluated \sn on five real world datasets using three query types. We
describe the experimental setup and baselines. We then demonstrate that \sn's
index construction is cheaper than recent state-of-the-art executed end-to-end,
that \sn's proxy scores outperforms per-query proxies on all settings we
consider, that all components of \sn are required for performance, and that \sn
is not sensitive to hyperparameter settings. \colora{Our anonymized code is
available at \url{https://anonymous.4open.science/r/tasti-76FA}.}

\subsection{Experimental Setup}
\minihead{Datasets, target labelers, and triplet loss}
We considered three video datasets, a text dataset, and a speech dataset. We used
the \texttt{night-street}, \texttt{taipei}, and \texttt{amsterdam} videos as
used by \blazeit~\cite{kang2019blazeit}. The \texttt{night-street} dataset is widely used in
video analytics evaluations~\cite{kang2017noscope, kang2019blazeit,
canel2019scaling, xu2019vstore}. The \texttt{taipei} dataset has two object classes (car and
bus) and we use the \emph{same} set of embeddings for both. We used Mask R-CNN as
the target labeler and ResNet-18 as our embedding DNN. \colora{The closeness function
separates frames with objects that are far apart and frames with different
numbers of objects (when also considering object types).}

For the text dataset, we used a semantic parsing dataset~\cite{zhongSeq2SQL2017}.
The dataset consists of pairs of natural language questions and corresponding
SQL statements. We assumed the SQL statements are not known at query time and
must be annotated by crowd workers (i.e., that crowd workers are the target
labeler). We used BERT~\cite{devlin2018bert} for the embedding DNN. We considered
queries over SQL operators and number of predicates. \colora{The closeness function separates
questions over different SQL operators and number of predicates.}

For the speech dataset, we used the Common Voice dataset~\cite{ardila2019common}.
The dataset consists of short speech snippets. We assumed that the attributes of
speaker gender and age are not known at query time and must be annotated by
crowd workers. We used an audio ResNet-22~\cite{kong2020panns} for the embedding
DNN. \colora{The closeness function separates records by gender and discretized
age bucket.}

\minihead{Queries and metrics}
We evaluated \sn and per-query proxies on three classes of queries using three
recently proposed algorithms: aggregation, selection, and limit queries.

Our primary cost metric across all queries is the number of target labeler
invocations and also report end-to-end costs for certain experiments. \colora{We
use target labeler invocations as the primary metric for several reasons. First,
in many cases, the target labeler is actually a human labeler, particularly when
used in social or life sciences \cite{kang2021accelerating}. Second, the target
labelers we evaluate are thousands to hundreds of times more expensive than
query processing costs and proxy models~\cite{kang2020jointly}, and thus make up
the majority of query costs. In addition, this strictly benefits systems that
use per-query proxies which must be executed at query time: \sn does not train a
model per query.}

\miniheadit{Aggregation}
For aggregation queries, we queried for an approximate statistic of the target
labeler executed on the unstructured data records. We computed the average number of
objects per frame for the video datasets, the average number of predicates per
query for the WikiSQL dataset, and the fraction of male speakers in the Common
Voice dataset.

\colora{
For all settings, we used the EBS sampling as used by the \blazeit system
\cite{kang2019blazeit}, which provides guarantees on error. EBS sampling uses
the proxy scores to guide target labeler sampling. Better proxy scores will result
in fewer target labeler invocations. As such, we measured the number of target
labeler invocations (lower is better).

}

We additionally compare \sn to approximate aggregation without statistical
guarantees, which uses the proxy scores to answer queries directly (as used by
\blazeit).

\miniheadit{Selection}
For selection queries, we executed approximate selection queries with recall
targets (\supg queries \cite{kang2020approximate}). We selected for frames with
objects for video datasets, natural language questions that are parsed into
selection SQL statements for the WikiSQL dataset, and male speakers in the
Common Voice dataset.

\colora{
Given a target labeler budget, these queries return a set of records matching a
predicate with a given recall target with a given confidence level (e.g.,
``return 90\% of instances of cars with 95\% probability of success''): these
queries are useful in scientific applications or mission-critical settings
\cite{kang2020approximate}. In contrast to queries that do not provide
statistical guarantees, \supg guarantees the recall target with high
probability. Since recall \supg queries fix the number of target labeler
invocations, we measured the false positive rate (lower is better).

}

We additionally compare \sn to approximate selection without statistical
guarantees, which uses the proxy scores to answer queries directly. We slightly
modify the query processing algorithms of \noscope, Tahoma, and probabilistic
predicates to directly use proxy scores and use the accuracy metric of F1 score.

\miniheadit{Limit}
For limit queries, we used the ranking algorithm proposed by
\citet{kang2019blazeit}. This ranking algorithm examines data records that are
likely to match the predicate of interest in descending order by the proxy
score. Proxy scores that have high recall for given number of records will
perform better. As such, we measured the number of target labeler invocations (lower
is better).

\minihead{Methods evaluated}
We used the query processing methods above and use the per-query proxies as used
in \citet{kang2019blazeit} (aggregation and limit queries) and
\citet{kang2020approximate} (selection queries). We use the exact proxy models
for the video datasets (a ``tiny ResNet''), logistic regression over FastText
embeddings~\cite{bojanowski2017enriching} for the WikiSQL dataset, and a smaller
CNN (CNN-10)~\cite{kong2020panns} for the Common Voice dataset. FastText
embeddings are less expensive than BERT embeddings.

Throughout, we refer to \sn when using a pre-trained DNN as the embedding DNN as
``\sn-\textsc{PT}'' (pre-trained) and \sn when using a triplet-loss trained
embedding DNN as ``\sn-\textsc{T}'' (trained). We demonstrate that
\sn-\textsc{T} generally outperforms \sn-\textsc{PT}.

\minihead{Hardware and timing}
We evaluated \sn on a private server with a single NVIDIA V100 GPU, 2 Intel Xeon
Gold 6132 CPUs (56 hyperthreads), and 504GB of memory. In contrast to prior
work, we timed end-to-end query processing times for \sn, \emph{including} the
video loading and embedding DNN execution times, which is excluded in prior
work~\cite{kang2019blazeit}.

Due to the large cost of executing the target labeler, we simulated its
execution by caching target labeler results and computing the average
execution time for the target labeler. For baselines, we only timed the target
labeler computation and exclude the computational cost of proxy models, which strictly
improves the baselines. We excluded the cost of query
processing~\cite{kang2019blazeit, kang2020approximate} as it is negligible in
all cases. Namely, the query processing is over orders of magnitude less
expensive than target labeler invocation for all queries we consider.

\subsection{Index Construction Performance}

\begin{figure}
  \includegraphics[width=\columnwidth]{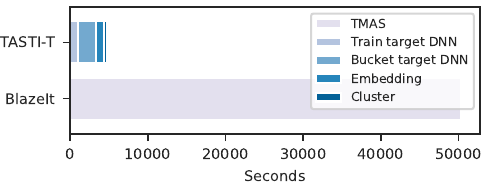}
  \caption{Breakdown of time to construct indexes for \sn and for \blazeit on
  the \texttt{night-street} dataset. The \blazeit index is the ``target-model
  annotated set'' (TMAS)~\cite{kang2019blazeit}. Similar results hold for other
  datasets.}
  \label{fig:eval-index-time}
\end{figure}

 \begin{figure}
   \includegraphics[width=\columnwidth]{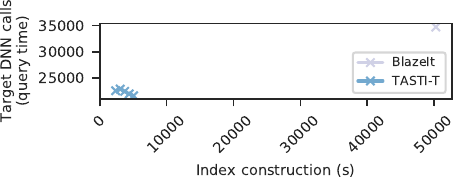}
   \caption{Index construction time vs performance of \sn and \blazeit for
   aggregation queries on the \texttt{night-street} dataset. Similar results hold
   for other datasets.}
   \label{fig:eval-index-vs-perf}
 \end{figure}

To understand the index construction performance, we measured the wall clock
time to construct \sn indexes. We compared to \blazeit, which constructs indexes
by executing the target labeler on a subset of the data (referred to as the ``TMAS''
\cite{kang2019blazeit}). For \blazeit, we only considered the cost of
constructing the TMAS. For \sn, we measured the full index construction time,
including the embedding DNN training and distance computation times. We computed
the construction times on the \texttt{night-street} dataset; similar results
hold for other datasets.

We show the breakdown of index construction time for \sn and \blazeit in
Figure~\ref{fig:eval-index-time} using the parameters in
Section~\ref{sec:eval-fu}. \sn requires far fewer target labeler invocations for
index construction, so is substantially faster than \blazeit.
We additionally show the index construction time vs performance for \blazeit and
a range of parameters for \sn (Figure~\ref{fig:eval-index-vs-perf}). \sn can
outperform or match \blazeit performance with up to 10$\times$ less expensive
index construction times.

\subsection{End-to-end Performance}
\label{sec:eval-fu}

We show that \sn outperforms recent state-of-the-art per-query proxy methods for
approximate aggregation, selection with guarantees, and limit queries. \colora{For all
video datasets in this section, we used 3,000 training records, 7,000 cluster
representatives, and an embedding size of 128. To show the generality of \sn, we
used a single set of embeddings/distances for both \texttt{taipei} classes. For
the WikiSQL and Common Voice datasets, we used 500 training examples and 500
cluster representatives. We measured the query processing costs or query accuracy
in this section.}
%

\begin{figure}[t!]
  \includegraphics[width=\columnwidth]{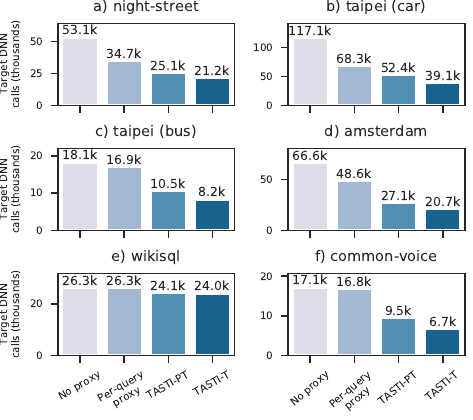}
  \caption{Number of target labeler invocations for baselines and \sn for
  \colora{approximate aggregation queries} (lower is better). As shown, \sn
  outperforms baselines in all cases, including prior, per-query proxy
  state-of-the-art by up to 2$\times$. \colora{All methods achieved the target
  accuracy.}}
  \label{fig:eval-fu-agg}
\end{figure}

\minihead{Approximate aggregation}
For approximate aggregation queries, we compared \sn to using no proxy (random
sampling) and an ad-hoc trained proxy model. We used the exact experimental
setup as \blazeit~\cite{kang2019blazeit} for video datasets, which targeted an
error of 0.01 and a success probability of 95\%. We aggregated over the average
number of objects per frame for all video datasets (cars or buses), the number
of clauses per statement in the WikiSQL dataset, and the fraction of male
speakers in the Common Voice dataset.

As shown in Figure~\ref{fig:eval-fu-agg}, \sn outperforms for aggregation
queries on all datasets. In particular, \sn outperform state-of-the-art
per-query proxies for aggregation queries (\blazeit) by up to 2$\times$ with
less expensive index construction costs. Further, \sn outperforms no proxy by up
to 3$\times$.

\sn's improved performance comes from better query-specific proxy scores
($\rho^2$ of 0.91 vs 0.55). As the correlation of the proxy scores with the
target labeler increases, the control variates variance decreases. Reduced variance
results in fewer samples, as the EBS stopping algorithm is adaptive with the
variance.

\begin{figure}[t!]
  \includegraphics[width=\columnwidth]{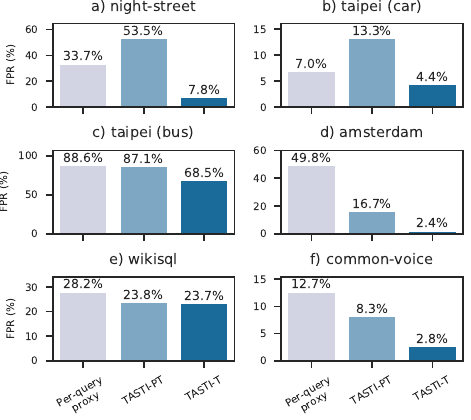}
  \caption{False positive rate for recall-target \supg queries (lower is
  better). We show the performance of baselines and \sn. As shown, \sn
  outperforms baselines in all cases.
  }
  \label{fig:eval-fu-supg}
\end{figure}

\minihead{Selection}
For selection queries with statistical guarantees (\supg queries), we compared
\sn to using an ad-hoc trained proxy model (standard random sampling is not
appropriate for \supg queries). We used the exact same experimental setup as in
\supg~\cite{kang2020approximate} for the video datasets.  For all queries, we
used a recall target of 90\% with a confidence of 95\%, as used
in~\cite{kang2020approximate}. We search for cars or buses in the video
datasets, star operators for WikiSQL, and male speakers in the Common Voice
dataset.

As shown in Figure~\ref{fig:eval-fu-supg}, \sn outperforms on all datasets. In
particular, \sn can improve the false positive rate by almost 21$\times$ over
recent state-of-the-art. We further show that the triplet training improves
performance.
As with aggregation queries, \sn's improved performance comes from better
query-specific proxy scores ($\rho^2$ of 0.90 vs 0.79).

\begin{figure}[t!]
  \includegraphics[width=\columnwidth]{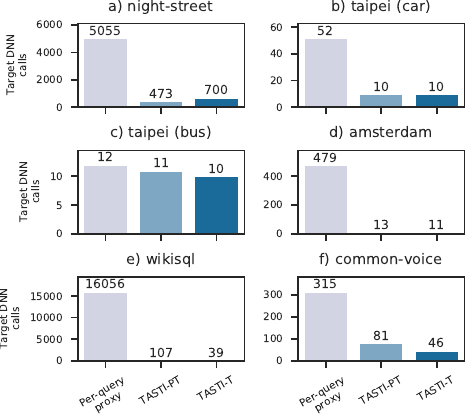}
  \caption{Number of target labeler invocations for baselines and \sn for
  limit queries (lower is better). \sn outperforms
  baselines in all cases, including prior state-of-the-art by up to 34$\times$.}
  \label{fig:eval-fu-limit}
\end{figure}

\minihead{Limit queries}
For limit queries, we used the ranking algorithm proposed by
\blazeit~\cite{kang2019blazeit}. We use the exact same experimental setup as
\blazeit for the video datasets (including the query configurations, e.g.,
number of objects, etc.). For limit queries, we use a custom scoring function
which is the regular scoring function with $k=1$ and ties broken by distance to
the cluster representatives.

Figure~\ref{fig:eval-fu-limit} shows \sn outperforms on all datasets. \sn
can improve performance by up to 24$\times$ compared to recent state-of-the-art.
As we demonstrate, \sn's FPF mining and FPF clustering are critical for
performance when searching for rare events (Section~\ref{sec:eval-factor},
Figure \ref{fig:eval-factor} and \ref{fig:eval-lesion}). The FPF algorithm
naturally produces clusters that are far apart, which is beneficial when
searching for rare events.

\begin{table}
\small
\begin{tabular}{lllll}
Target & \sn        & \sn         & Uniform    & Exhaustive \\
       & (no index) & (all costs) & (no proxy) & \\
\hline
Human labeler & \textbf{\$1,482} & \$1,972 & \$3,717  & \$68,116 \\
Mask R-CNN    & \textbf{7,060 s} & 9,474 s & 17,702 s & 324,362 s \\
SSD           & \textbf{141 s}   & 269 s   & 354 s    & 6,487 s
\end{tabular}
\caption{\colora{Query costs for \sn (when amortizing the cost of constructing
the index), \sn (including the cost of the index), uniform sampling, and
exhaustive labeling for answering an approximate aggregation query on the
\texttt{night-street} dataset. \sn, both with and without the indexing costs,
outperforms in all cases.}}
\label{table:query-costs}
\end{table}

\colora{
\minihead{Comparison without proxies (aggregation)}
To illustrate the performance of \sn, we also compare total costs of uniform
sampling with no index and to exhaustive labeling. In these experiments, we
compare against three different target labelers: human labelers, Mask R-CNN, and SSD (an
inexpensive object detection method). Different applications require different
levels of accuracy: the human labeler is the most accurate, followed by Mask
R-CNN, and finally SSD. Importantly, we note that SSD is over 2$\times$ less
accurate than Mask R-CNN (50.2 vs 23.0 mAP), so can result in inaccurate queries
when used as the target labeler.

We show the cost of \sn (index cost amortized), \sn (including index cost),
uniform sampling, and exhaustive labeling in Table~\ref{table:query-costs} for
the three different targets for aggregation on the \texttt{night-street}
dataset. We use the same approximate aggregation query as above, targeting an
error of 0.01 and a success probability of 95\%. As shown, the cost when using
\sn can be up to 46$\times$ cheaper to answer aggregation queries. Importantly,
\sn is cheaper in all cases \emph{when including the cost of building the index}
for answering this query.


Finally, we note that cheaper models are less accurate: SSD results in a 33\%
error compared to the more accurate Mask R-CNN. Matching the accuracy of TASTI
with Mask R-CNN as the target labeler takes 30.4 s to execute, which is cheaper
than exhaustively executing SSD.

}

\subsection{New Queries}
\label{sec:eval-new-queries}

\begin{figure}[t!]
  \includegraphics[width=\columnwidth]{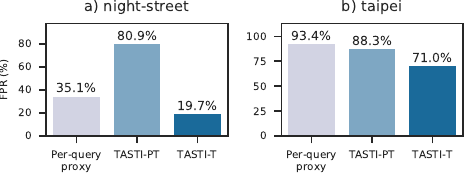}
  \caption{\supg queries for selecting objects of interest on the left hand side
  of the frame. This query violates the Lipschitz condition, but \sn still
  outperforms baselines.}
  \label{fig:eval-lhs-supg}
\end{figure}

\begin{figure}[t!]
  \includegraphics[width=\columnwidth]{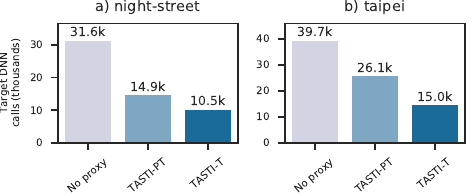}
  \caption{Aggregation query for the average $x$ position of objects in frames
  of a video. Recent state-of-the-art is not well suited for this query as
  regression can be difficult for proxy models. In contrast, \sn performs
  well on these queries.}
  \label{fig:eval-avg-pos}
\end{figure}

In addition to the queries above, we demonstrated that \sn can be used to
efficiently answer queries that prior work is not well suited for. We considered
two queries over positions of objects in video. In particular, these tasks
require modified data preprocessing or losses for proxy models, but \sn can
naturally produce proxy scores for both tasks.

\minihead{Selecting objects by position}
We considered the query of selecting objects in the left hand side of the video,
as measured by the average x-position of the bounding box. We compared \sn to to
training a proxy model by extending \supg and to \sn without triplet training.
Results are shown in Figure~\ref{fig:eval-lhs-supg}.

Prior proxy models were not designed to take position into account, which may
explain their poor performance: there is a sharp discontinuity for labels in the
center of the frame. Learning the boundary in the frame may require large
amounts of training data. In contrast, \sn performs well on as it uses the
information from the target labeler, despite the query violating our assumptions in
the theoretical analysis. As shown, \sn outperforms both baselines.

\minihead{Average position}
We consider the query of computing the average position of objects in frames of
video (specifically the x-coordinate). We compare \sn to random sampling (no
proxy) and to \sn without triplet training. We attempted to train a \blazeit
proxy model by regressing the output to the average position but were unable to
train a model that outperformed random sampling. \blazeit was not configured for
such queries and that we are unaware of work on proxy models for pure
regression. We show results in Figure~\ref{fig:eval-avg-pos}.
As shown, \sn outperforms random sampling by up to 3$\times$, without having to
implement custom training code for a new proxy model.


\begin{table}
\small
\begin{tabular}{llll}
  Dataset & Method & Query & Quality metric \\ \hline \hline
  \texttt{night-street} & \sn      & Agg. & \textbf{3.3\%} \\
  \texttt{night-street} & \blazeit & Agg. & 4.4\% \\
  \hline
  \texttt{night-street} & \sn      & Selection & \textbf{5.5} \\
  \texttt{night-street} & \noscope & Selection & 14.9
\end{tabular}
\caption{
Performance of \sn and baselines on queries without statistical guarantees
(lower is better). The quality metrics are percent error and 100 - F1 score for
aggregation and selection queries respectively. \sn outperforms on all settings
we considered.
}
\label{table:eval-no-guarantees}
\end{table}

\subsection{Queries Without Guarantees}
\label{sec:eval-no-guarantee}
In addition to queries with statistical guarantees, we executed aggregation and
selection queries without statistical guarantees. For aggregation queries, we
used the proxy score to directly compute the statistic of interest and measured
the percent error from the ground truth. For selection queries, we used the
proxy score to select records above some threshold. As some selection queries
are class imbalanced, we measured 100 - F1 score (so lower is better).

We show results for \sn and for proxy model-based baselines in
Table~\ref{table:eval-no-guarantees}. As shown, \sn outperforms on quality
metrics for all settings we considered, indicating that \sn's proxy scores are
higher quality.

\begin{table}
\small
\begin{tabular}{llll}
  Dataset & 1st query & 2nd query & Quality metric \\ \hline \hline
  \texttt{night-street} & Agg.   & \supg & 4.9\% (8.6\%) \\
  \texttt{taipei}       & Agg.   & \supg & 40.1\% (55.9\%) \\
  \hline
  \texttt{night-street} & \supg  & Agg.  & 18.9k (21.2k) \\
  \texttt{taipei}       & \supg  & Agg.  & 34.6k (39.1k)
\end{tabular}
\caption{
Performance of \sn after cracking. We measured query performance of a
\supg/aggregation query after cracking (false positive rate and number of target
labeler invocations, lower is better for both). Results after cracking are shown
along with results before cracking in parentheses. \sn improves results in all
settings we tested.
}
\label{table:eval-cracking}
\end{table}

\subsection{Cracking}
\label{sec:eval-cracking}

We further demonstrated that \sn's indexes can be
``cracked''~\cite{idreos2007database}. To show this, we executed an aggregation
query followed by a \supg query and vice versa. We used the target labeler
annotations from the first query to improve \sn's index before executing the
second query. We use the same quality/runtime metrics as for queries with
statistical guarantees.

As shown in Table~\ref{table:eval-cracking}, \sn improves in performance for
both queries. In particular, \sn can improve (decrease) the false positive rate
for \supg queries by up to 1.7$\times$ after repeated queries.

\subsection{Factor Analysis and Lesion Study}
\label{sec:eval-factor}

We investigated whether all of \sn's components contributes to performance. We
find that all components of \sn (triplet loss, FPF mining, and FPF clustering)
are critical to performance.

\begin{figure}
  \includegraphics[width=\columnwidth]{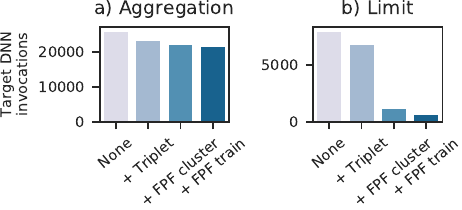}
  \caption{Factor analysis, in which optimizations are added in in sequence. As
  shown, all optimizations improve performance for aggregation queries. For
  limit queries, FPF training and clustering are required for triplet training to
  improve performance.}
  \label{fig:eval-factor}
\end{figure}

\minihead{Factor analysis}
We first performed a factor analysis, in which we began with no optimizations
and added the triplet loss, FPF mining, and FPF clustering in turn. For brevity,
we show results for the \texttt{night-street} dataset for aggregation and limit
queries.  Aggregation queries highlight ``average-case'' performance and limit
queries highlight ``rare-event'' performance. We choose the
\texttt{night-street} dataset as
it has been widely studied in visual analytics \cite{kang2017noscope,
kang2019blazeit, kang2020approximate, canel2019scaling, xu2019vstore}; other
datasets have similar behaviors.

As shown in Figure~\ref{fig:eval-factor}, all optimizations help performance. In
particular, FPF clustering substantially improves limit query performance, as it
selects frames that are semantically distinct.

\begin{figure}
  \includegraphics[width=\columnwidth]{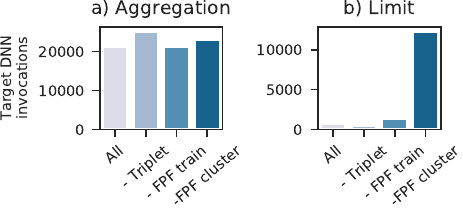}
  \caption{Lesion study, in which optimizations are removed individually (they
  are not removed cumulatively). As shown, all optimizations improve
  performance. Lower is better for both aggregation and limit queries.}
  \label{fig:eval-lesion}
\end{figure}

\minihead{Lesion study}
We then performed a lesion study, in which we start with all optimizations, and
remove each optimization individually (triplet loss, FPF mining, and FPF
clustering). As with the factor analysis, we show results for the
\texttt{night-street} dataset for aggregation and limit queries; other datasets have
similar behaviors.

We show results in Figure~\ref{fig:eval-lesion}. As shown, triplet training
significantly improves aggregation performance. Furthermore, FPF clustering is
critical for limit query performance.

\subsection{Sensitivity Analysis}
\label{sec:eval-sensitivity}

We investigated whether \sn is sensitive to hyperparameters by varying the
number of training examples, number of cluster representatives, and embedding
size. As we show, \sn outperforms baselines on a wide range of parameter
settings, demonstrating that hyperparameters are not difficult to select.

\begin{figure}
  \includegraphics[width=\columnwidth]{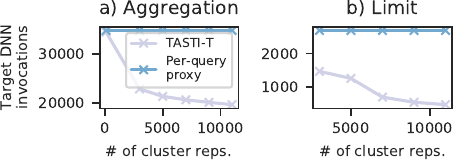}
  \caption{Number of cluster representatives vs performance on aggregation and
  limit queries on the \texttt{night-street} dataset. As shown, \sn outperforms
  baselines on a range of parameter settings.}
  \label{fig:eval-nb-buckets}
\end{figure}

\minihead{Number of buckets}
A critical parameter that determines \sn performance is the number of buckets in
the index. To understand the effect of the number of buckets on performance, we
vary the number of buckets and measured performance on aggregation and limit
queries on the \texttt{night-street} dataset. We used 3,000, 5,000, 7,000, 9,000,
and 11,000 buckets for both queries. For aggregation queries, we additionally
used 50 buckets.

As shown in Figure~\ref{fig:eval-nb-buckets}, \sn performance improves as the
number of buckets increases. For aggregation queries, \sn outperforms with as
few as 50 buckets. For limit queries, \sn outperforms with 5,000 buckets. We
note that this setting still corresponds to index construction over 10$\times$
less expensive than the baseline.

\begin{figure}
  \includegraphics[width=\columnwidth]{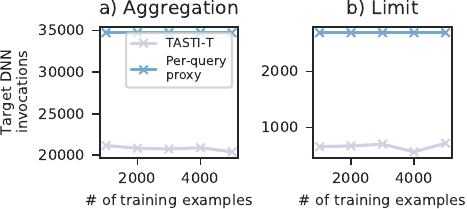}
  \caption{Number of training examples vs performance on aggregation and limit
  queries on the \texttt{night-street} dataset. As shown, \sn outperforms
  baselines on a range of parameter settings.}
  \label{fig:eval-nb-train}
\end{figure}

\minihead{Number of training examples}
To understand how the number of training examples affects the performance of
\sn, we used 1,000, 2,000, 3,000, 4,000, and 5,000 training examples. We
measured performance on aggregation and limit queries on the
\texttt{night-street} dataset.
As shown in Figure~\ref{fig:eval-nb-train}, the performance of \sn does not
significantly change with the number of training examples. \sn outperforms
baselines across all settings we consider.

\begin{figure}
  \includegraphics[width=\columnwidth]{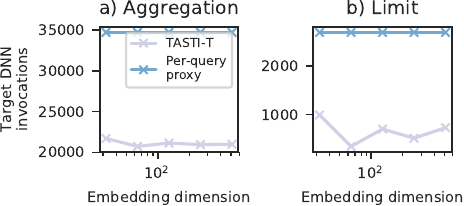}
  \caption{Embedding size vs performance on aggregation and limit queries on the
  \texttt{night-street} dataset. \sn outperforms baselines on a range of
  parameter settings.}
  \label{fig:eval-embed-size}
\end{figure}

\minihead{Embedding size}
To understand how the embedding size affects performance, we varied
the embedding size and measured performance on aggregation and limit queries on
the \texttt{night-street} dataset. We used embedding sizes of 32, 64, 128, 256,
and 512.
We show results in Figure~\ref{fig:eval-embed-size}. As shown, \sn outperforms
per-query proxies across a range of embedding sizes for aggregation and limit
queries.


\section{Related Work}
\label{sec:rel-work}

\minihead{DNN-based queries}
Recent work in the database and systems community has focused on accelerating DNN-based
queries. Many systems have been developed to accelerate certain classes of
queries, including selection without statistical guarantees
\cite{kang2017noscope, lu2018accelerating, anderson2018predicate,
hsieh2018focus}, selection with statistical guarantees
\cite{kang2020approximate}, aggregation queries, limit
queries~\cite{kang2019blazeit}, tracking queries \cite{bastani2020miris}, and 
other queries. These systems reduce the cost of expensive target
labeler, often by using cheap, query-specific proxy models. In this work, we
propose a general index to accelerate many such queries over the schema induced
by the target labeler. We leverage many of the downstream query processing
techniques used in this prior work.

Other work assumes that the target labeler is not expensive to execute or that
extracting bounding boxes is not expensive \cite{zhang2019panorama,
fu2019rekall}. We have found that many applications require accurate and
expensive target labelers, so we focus on reducing executing the target labeler wherever
possible.

\minihead{Structured data indexes}
There is a long history in the database literature of indexes for structured
data~\cite{ullman1984principles}. These indexes generally are used to
accelerated lookups on certain columns. Techniques range from tree-like
structures \cite{bayer1970organization, hellerstein1995generalized,
comer1979ubiquitous} to hash tables~\cite{garcia2008database}. However, these
indexes assume that the data is present in a structured format, which is not the
case for the data we consider.

\minihead{Unstructured data indexes}
The analytics community has also long studied indexes for unstructured data.
Many of these indexing methods are modality-specific, such as indexes for
spatial data~\cite{guttman1984r, ciaccia1997m}, time
series~\cite{agrawal1993efficient, faloutsos1994fast}, and low-level visual
features~\cite{flickner1995query, smith1997visualseek}. Other indexes accelerate
KNN search in possibly high dimensions, when the distances are
meaningful~\cite{yu2001indexing, jagadish2005idistance}.
Work in retrieval has used indexed embeddings to accelerate search for
semantically similar items, in particular for visual
data~\cite{babenko2014neural, lin2015deep, zhang2019panorama}. While this work
also accelerates queries over unstructured data, our work differs in focusing
on constructing proxy scores to address unique challenges when queries require
executing expensive target labelers.



\minihead{Coresets}
Several communities, including the theory and deep learning communities, have
considered coresets~\cite{agarwal2005geometric}, which are concise summaries of
data. They have been used for nearest neighbor searches~\cite{har2004coresets},
streaming data~\cite{braverman2016new}, active learning~\cite{sener2017active},
and other applications. We are unaware of work that trains embedding DNNs as an
index for proxy score generation.


\section{Conclusion}
\label{sec:conclusion}

To reduce the cost of queries using expensive target labelers, we introduce a method
of constructing indexes for unstructured data. \indname
relies on the key property that many queries only require access to target
labelers outputs, which are often highly redundant. \sn uses an embedding DNN and
target labeler annotated cluster representatives as its index, which allows for more
accurate and generalizable proxy scores across a range of query types. We
theoretically analyze \sn to understand its statistical accuracy. We show that
these indexes can be constructed up to 10$\times$ more efficiently than recent
work. We further show that they can be used to answer queries up to 24$\times$
more efficiently than recent state-of-the-art.

\appendix

\section{Proofs of Lemmas and Theorems}
\label{sec:proofs}


\begin{lemma}
\label{lemma:dist}
If $\ell_T(x_a, x_p, x_n) = 0$ for $x_a \in \universe, x_p \in B_M(x_a), x_n \in
\bar{B}_M(x_a)$, then for all $x_i, x_r$ such that $|\emb(x_i) - \emb(x_r)| < m$
we have $d(x_i, x_r) < M$.
\end{lemma}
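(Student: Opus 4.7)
The plan is to prove the contrapositive: assume $d(x_i, x_r) \geq M$ and derive $|\emb(x_i) - \emb(x_r)| \geq m$, which contradicts the hypothesis that $|\emb(x_i) - \emb(x_r)| < m$.

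The key observation is that the hypothesis ``$\ell_T(x_a, x_p, x_n) = 0$ for $x_a \in \universe, x_p \in B_M(x_a), x_n \in \bar{B}_M(x_a)$'' should be read as a universal statement over all admissible triplets. Given any $x_i$ and $x_r$ with $d(x_i, x_r) \geq M$, I would instantiate this universal statement with the particular triplet $x_a = x_i$, $x_p = x_i$, $x_n = x_r$. The positive point is valid because $d(x_a, x_a) = 0 < M$ (so $x_i \in B_M(x_i)$), and the negative is valid because $d(x_i, x_r) \geq M$ places $x_r \in \bar{B}_M(x_i)$.

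Plugging this triplet into the definition $\ell_T(x_a, x_p, x_n) = \max(0, m + |\emb(x_a) - \emb(x_p)| - |\emb(x_a) - \emb(x_n)|)$ collapses the middle term to zero, leaving $\max(0, m - |\emb(x_i) - \emb(x_r)|) = 0$. This forces $|\emb(x_i) - \emb(x_r)| \geq m$, contradicting the assumption $|\emb(x_i) - \emb(x_r)| < m$. Hence $d(x_i, x_r) < M$, as claimed.

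I do not anticipate a serious obstacle here; the lemma is essentially a contrapositive unpacking of the triplet loss definition with a well-chosen degenerate triplet. The only subtle point is justifying that the hypothesis applies pointwise (not just in expectation) and that $x_a$ is an admissible positive for itself, which is immediate from $B_M$ being an open ball of positive radius.
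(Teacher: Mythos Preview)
Your proposal is correct and matches the paper's own proof: both argue the contrapositive by instantiating the universal zero-loss hypothesis with $x_a = x_i$, $x_n = x_r$, and a positive in $B_M(x_i)$, then read off $|\emb(x_i) - \emb(x_r)| \geq m$ from $\ell_T = 0$. The only cosmetic difference is that you pick the specific positive $x_p = x_i$ (making the middle term vanish), whereas the paper allows any $x_p \in B_M(x_i)$ and uses $|\emb(x_a) - \emb(x_p)| \geq 0$; the conclusion is identical.
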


\begin{proof}
To prove the lemma, we will show the contra-positive: if $d(x_i, x_r) \geq M$
implies that $|\emb(x_i) - \emb(x_r)| \geq m$, we have our result.

Let $x_a = x_i, x_n = x_r, x_p \in B_M(x_i)$. $B_M(x_i)$ must be nonempty since
$x_i \in B_M(x_i)$. This implies in inequality:
%
{
\small
\begin{align*}
    0  &\geq m + |\emb(x_a) - \emb(x_p)| - |\emb(x_a) - \emb(x_n)| \\
    |\emb(x_a) - \emb(x_n)| &\geq m.
\end{align*}
}
\end{proof}

\begin{proof}[Proof of Theorem \ref{thm:zero-loss}]
Since the triplet loss is bounded below by 0, $L(\emb; M, m) = 0$ implies that
$\ell_T(x_a, x_p, x_n) = 0$ for any $x_a, x_n$ such that $d(x_a, x_n) > M$,
since $\ell_T$ is bounded below by 0. By Lemma \ref{lemma:dist} with $x_i = x$
and $x_r = c(x)$, and since the maximum intra-cluster instance is $m$,
\[
d(x, c(x)) < M
\]
for all $x \in \universe$.

Then for every $x$:
\begin{align}
    |\lossq(x, f(x)) &- \lossq(x, \hat{f}(x))| \\
    &\leq |\lossq(x, f(x)) - \lossq(c(x), f(c(x)))| + \nonumber \\
    & \;\;\;\;\;  |\lossq(c(x), f(c(x))) - \lossq(x, f(c(x)))| \\
    &\leq M \cdot K_Q
\end{align}
This follows by the definition of $\hat{f}$, the Lipschitz condition of
$\lossq$, and the non-negativity of $\lossq$.

The proof follows from taking expectations.
\end{proof}


\begin{lemma}
\label{lemma:prob-bound}
{
\small
\begin{align*}
  \mathbb{P} \left[\inf_{x' \in \bar{B}_M(x)} |\emb(x) - \emb(x')| \leq |\emb(x) - \emb(x_p)| \right] \geq \\
  \mathbb{P}[d(x, c(x)) > M]
\end{align*}
}
for any distribution of $x_p$ such that the condition distribution of $x_p$ on
$x$ has support on $B_M(x)$.
\end{lemma}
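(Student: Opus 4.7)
The plan is to establish Lemma~\ref{lemma:prob-bound} by a pointwise implication: show that whenever $x$ realizes the event $E := \{d(x, c(x)) > M\}$, the inequality
\[
\inf_{x' \in \bar{B}_M(x)} |\emb(x) - \emb(x')| \;\leq\; |\emb(x) - \emb(x_p)|
\]
holds for every $x_p$ in the support of its conditional distribution (i.e., for every $x_p \in B_M(x)$). The probability inequality then follows by monotonicity of measure, uniformly across all distributions of $x_p$ whose support lies in $B_M(x)$.

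\textbf{Key steps.} First, fix $x$ on the event $E$. Because $d(x, c(x)) > M$, the cluster representative $c(x)$ itself belongs to $\bar{B}_M(x)$, so it is a feasible candidate for the infimum and
\[
\inf_{x' \in \bar{B}_M(x)} |\emb(x) - \emb(x')| \;\leq\; |\emb(x) - \emb(c(x))|.
\]
Second, invoke the clustering density hypothesis of Theorem~\ref{thm:non-zero-loss}, namely $\max_x |\emb(x) - \emb(c(x))| < m$, together with the fact that $c$ maps each record to its nearest representative in embedding space. These give enough control on $|\emb(x) - \emb(c(x))|$ to compare it against $|\emb(x) - \emb(x_p)|$ for an arbitrary $x_p \in B_M(x)$, yielding the desired pointwise implication. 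Third, integrate the indicator of $E$ against the joint law of $(x, x_p)$ and use $\mathbb{P}[A] \geq \mathbb{P}[E]$ whenever $E \subseteq A$ to conclude.

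\textbf{Main obstacle.} The substantive step is the second one, establishing $|\emb(x) - \emb(c(x))| \leq |\emb(x) - \emb(x_p)|$ for $x_p \in B_M(x)$ on the event $E$. The defining property of $c(x)$ gives minimality of embedding distance only over cluster representatives, while $x_p$ is an arbitrary member of $B_M(x)$ and need not itself be a representative. The natural route is to apply the triangle inequality via $x_p$'s own representative $c(x_p)$, using that $|\emb(x_p) - \emb(c(x_p))| < m$ by the same density hypothesis and that $|\emb(x) - \emb(c(x))| \leq |\emb(x) - \emb(c(x_p))|$ by the minimality of $c(x)$ among representatives. Closing this bookkeeping cleanly, so that the excess absorbed by the margin $m$ is consistent with the role Lemma~\ref{lemma:prob-bound} plays downstream in producing the $\alpha / m$ factor in Theorem~\ref{thm:non-zero-loss}, is the delicate part of the argument.
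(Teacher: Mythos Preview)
Your high-level plan---show that the event $E=\{d(x,c(x))>M\}$ implies the left-hand event pointwise, then pass to probabilities---is exactly the paper's argument. The paper's proof reads, essentially: ``$d(x,c(x))>M$ implies $c(x)\in\bar{B}_M(x)$; then $\inf_{x'\in\bar{B}_M(x)}|\emb(x)-\emb(x')|\le|\emb(x)-\emb(x_p)|$ for all $x_p\in B_M(x)$.'' It does not elaborate on the comparison step you flag as the obstacle.

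The issue is that your proposed route through $c(x_p)$ does not close that step. From minimality of $c(x)$ over representatives and the density hypothesis you get
\[
|\emb(x)-\emb(c(x))|\;\le\;|\emb(x)-\emb(c(x_p))|\;\le\;|\emb(x)-\emb(x_p)|+|\emb(x_p)-\emb(c(x_p))|\;<\;|\emb(x)-\emb(x_p)|+m,
\]
which is off by an additive $m$; it does not yield $\inf_{x'\in\bar{B}_M(x)}|\emb(x)-\emb(x')|\le|\emb(x)-\emb(x_p)|$. In fact the pointwise implication itself is not true without further assumptions: take a single representative $r$ with $d(x,r)>M$, some $x_p\in B_M(x)$, and an embedding that places $x_p$ much closer to $x$ than $r$ is (this is compatible with $|\emb(\cdot)-\emb(c(\cdot))|<m$). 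Then $\bar{B}_M(x)=\{r\}$ and the infimum equals $|\emb(x)-\emb(r)|$, which can exceed $|\emb(x)-\emb(x_p)|$. So the ``delicate bookkeeping'' you anticipate cannot be carried out along this line; the $+m$ slack does not disappear, and the margin already has a fixed role downstream in Lemma~\ref{lemma:triplet-bound}. The paper's own proof simply asserts the implication after observing $c(x)\in\bar{B}_M(x)$, so you are not missing a trick that the paper supplies---the gap you identified is real in both arguments.
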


\begin{proof}
Recall that $c(x) := \argmin_{x_r \in \mathcal{R}} |\emb(x) - \emb(x_r)|$ and that $d(x, c(x)) >
M$ implies $c(x) \in \bar{B}_M(x)$. Then, we have that $\inf_{x' \in
\bar{B}_M(x)} |\emb(x) - \emb(x')| \leq |\emb(x) - \emb(x_p)|$ for all $x_p \in
B_M(x)$. This gives us the lemma.
\end{proof}

\begin{lemma}
\label{lemma:triplet-bound}
\begin{align*}
\frac{1}{m} \ell_T(x, x_p, x_n) \geq
\mathds{1}_{|\emb(x) - \emb(x_n)| \leq |\emb(x) - \emb(x_p)|}
\end{align*}
for any $x_n \in \bar{B}_M(x), x_p \in B_M(x)$.
\end{lemma}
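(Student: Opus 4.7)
The plan is to do a simple case split on the indicator function and verify the inequality in each case directly from the definition of $\ell_T$.

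First, I would observe that when the indicator $\mathds{1}_{|\emb(x) - \emb(x_n)| \leq |\emb(x) - \emb(x_p)|}$ is $0$, the right-hand side is $0$, and the inequality is immediate from the non-negativity of $\ell_T$ (which itself follows from the outer $\max(0, \cdot)$ in its definition).

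Next, in the case where the indicator equals $1$, i.e., $|\emb(x) - \emb(x_n)| \leq |\emb(x) - \emb(x_p)|$, I would plug directly into the triplet loss definition: $|\emb(x) - \emb(x_p)| - |\emb(x) - \emb(x_n)| \geq 0$, so $m + |\emb(x) - \emb(x_p)| - |\emb(x) - \emb(x_n)| \geq m > 0$, and therefore the $\max$ with $0$ is inactive, giving $\ell_T(x, x_p, x_n) \geq m$. Dividing by $m$ gives $\frac{1}{m}\ell_T(x, x_p, x_n) \geq 1$, which matches the right-hand side.

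There is essentially no obstacle here; the lemma is a direct consequence of the definition of the triplet loss, and the hypotheses $x_n \in \bar{B}_M(x)$ and $x_p \in B_M(x)$ are not actually used in the pointwise inequality (they will matter later when this lemma is combined with Lemma~\ref{lemma:prob-bound} to bound the probability term in Theorem~\ref{thm:non-zero-loss}). The only subtlety worth flagging is to make sure the case split covers equality correctly: on the boundary $|\emb(x) - \emb(x_n)| = |\emb(x) - \emb(x_p)|$ the indicator is $1$, and indeed $\ell_T \geq m$ still holds by the same computation.
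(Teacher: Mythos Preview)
Your proposal is correct and is essentially the same argument as the paper's: the paper simply rewrites $\tfrac{1}{m}\ell_T$ as the hinge $\max(0,1-z)$ with $z = (|\emb(x)-\emb(x_n)|-|\emb(x)-\emb(x_p)|)/m$ and invokes the standard fact that the hinge dominates the indicator $\mathds{1}_{z\le 0}$, which is exactly what your case split verifies explicitly. Your observation that the hypotheses $x_n \in \bar{B}_M(x)$, $x_p \in B_M(x)$ are not used in the pointwise bound is also correct.
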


\begin{proof}
\begin{align*}
\frac{1}{m} \ell_T(x, x_p, x_n) &=
    \frac{1}{m} \cdot \max(0, m + |\emb(x) - \emb(x_p)| - |\emb(x) - \emb(x_n)|) \\
&= \max \left(0, 1 - \left( \frac{|\emb(x) - \emb(x_n)| - |\phi(x) - \emb(x_p)|}{m} \right) \right) \\
&\geq \mathds{1}_{|\emb(x) - \emb(x_n)| \leq |\emb(x) - \emb(x_p)|}.
\end{align*}
which follows from the hinge dominating the indicator.

\end{proof}

\begin{proof}[Proof of Theorem \ref{thm:non-zero-loss}]
Consider the indicators $\mathds{1}_{d(x, c(x)) \leq M}$ and
its complement $\mathds{1}_{d(x, c(x)) > M}$.

We analyze
\begin{align*}
&\Exp[\lossq(x, \hat{f}(x))] = \\
  &\Exp[\lossq(x, \hat{f}(x)) \cdot \mathds{1}_{d(x, c(x)) \leq M}] +
  \Exp[\lossq(x, \hat{f}(x)) \cdot \mathds{1}_{d(x, c(x)) > M}]
\end{align*}

By Theorem \ref{thm:zero-loss} and that expectations of indicators are bounded above by 1, we have that
\[
\Exp[\lossq(x, \hat{f}(x)) \cdot \mathds{1}_{d(x, c(x)) \leq M}] \leq
  \Exp[\lossq(x, f(x))] + M \cdot K_Q
\]

To show the RHS we observe that
\begin{align*}
\frac{\sup_x |\bar{B}_M(x)|}{m} &\Exp[\ell_T(x, x_p, x_n)] \\
&\geq \frac{1}{m} \Exp_{x, x_p}\left[ \sum_{x_n' \in \bar{B}_M(x)} \frac{\sup_x |\bar{B}_M(x)|}{|\bar{B}_M(x_n')|} \ell_T(x, x_p, x_n') \right] \\
&\geq \frac{1}{m} \Exp\left[ \sup_{x_n' \in \bar{B}_M(x)} \ell_T(x, x_p, x_n') \right] \\
&\geq \Exp\left[ \inf_{x_n^* \in \bar{B}_M(x)} \frac{1}{m}\ell_T(x, x_p, x_n^*) \right] \\
&\geq \Exp\left[ \mathds{1}_{\inf_{x' \in \bar{B}_M(x)} |\emb(x) - \emb(x')| \leq |\emb(x) - \emb(x_p)|} \right] \\
&\geq \mathbb{P}[d(x, c(x)) > M]
\end{align*}
which follow from Lemmas \ref{lemma:prob-bound} and \ref{lemma:triplet-bound}.

Taking expectations, using H\"{o}lder's inequality, and maximizing $\lossq$
gives us the result.

\end{proof}

%
%
%
%
%

\bibliographystyle{ACM-Reference-Format}
\bibliography{paper}

\end{document}